\numberwithin{equation}{section}
\newtheorem{thm}{Theorem}[section]
\newtheorem{conj}[thm]{Conjecture}
\newtheorem{lemma}[thm]{Lemma}
\begin{document}

\title[Isothermal lensing]{Transcendental Harmonic Mappings and Gravitational Lensing by Isothermal Galaxies}


\author[D. Khavinson]{Dmitry Khavinson}
\address{Department of Mathematics and Statistics\\
University of South Florida\\ 4202 E. Fowler Ave., PHY114\\ Tampa FL
33617} 
\email{dkhavins@cas.usf.edu}

\author[E. Lundberg]{Erik Lundberg}
\address{Department of Mathematics and Statistics\\
University of South Florida\\ 4202 E. Fowler Ave., PHY114\\ Tampa FL
33617}
\email{elundber@mail.usf.edu}

\footnotetext[1]{Both authors gratefully acknowledge partial support from the National Science Foundation.}

\date{2009}

\begin{abstract}
Using the Schwarz function of an ellipse, it was recently shown that galaxies with density constant on confocal ellipses can produce at most four ``bright'' images of a single source.  The more physically interesting example of an isothermal galaxy has density that is constant on \emph{homothetic} ellipses.  In that case bright images can be seen to correspond to zeros of a certain transcendental harmonic mapping.  We use complex dynamics to give an upper bound on the total number of such zeros.
\end{abstract}

\maketitle

\section{Introduction}

In this paper, we obtain an upper bound for the number of solutions of the equation

\begin{equation}
	\arcsin\left(\frac{k}{\bar{z}+\bar{w}}\right) = z,
	\label{lens2}
\end{equation}

where $w$ is a complex parameter, and $k$ is a real parameter.

Our motivation for doing so is that solutions of (\ref{lens2}) in fact correspond to virtual images observed when the light from a distant source passes near an isothermal, ellipsoidal galaxy.  Indeed, using the complex formulation of the thin-lens approximation (\cite{St}), the lensing equation is calculated by finding the Cauchy transform of the mass distribution projected to the ``lens plane''.  This was carried out in \cite{K} with the following result (we also sketch the derivation in the last section for the reader's convenience).

\begin{equation}
	\mathcal{C} \arcsin\left(\frac{c}{\bar{\zeta}}\right)+ \omega = \zeta
	\label{lens1}
\end{equation}

Here, we take the principal branch of $\arcsin$, $\mathcal{C}$ and $c$ are real constants depending on the elliptical projection of the galaxy onto the lens plane, and $\omega$ is the position of the source (projected to the lens plane), and values of $\zeta$ which satisfy (\ref{lens1}) give positions of the observed images.  Changing variables to $z = \frac{\zeta-\omega}{\mathcal{C}}$, $w=\omega/\mathcal{C}$, and $k=c/\mathcal{C}$ puts (\ref{lens1}) into the form of equation (\ref{lens2}) while preserving the number of solutions.

We should mention that the anti-analytic potential in the lensing equation considered here (and, also,
in \cite{BE} and \cite{K}) differs from the potential in the lensing equation in the  model often used by astrophysicists (see \cite{KMW} and the references therein), where the projected mass density is supported in the entire complex plane.  Both models use the ``isothermal'' density proportional to $1/t$ on ellipses $\{x^2/a^2+y^2/b^2=t^2\}$ ($a$ and $b$ fixed).  The model considered here that yields equation (\ref{lens1}) assumes that the density is zero for all $t$ greater than some value (see Appendix).  Letting the density have infinite support assumes that the galaxy has infinite mass and fills the universe, yet it is the simplest way to avoid giving the galaxy a ``sharp edge'' and astronomers have found that the model behaves reasonably in the region where the lensed images occur.  We consider the model with physically realistic compact support but less realistic ``sharp edge'' for a mathematical reason: in that setting, lensed images described by solutions of equation (\ref{lens2}) correspond to zeros of a \emph{harmonic} function (We note that models with ``sharp edges'' have been  considered by astrophysicists as well, cf. the recent preprints  \cite{PRT1} and \cite{PRT2}).

For gravitational lenses consisting of $n$ point masses, Mao, Petters, and Witt \cite{MPW} suggested (1997) that the bound for the number of images was linear in $n$ (Bezout's theorem provides a bound quadratic in $n$). Rhie refined this in 2001, conjecturing that a gravitational lens consisting of $n$ point masses cannot create more than $5n - 5$ images of a given source \cite{R1}. In 2003, she constructed point-mass configurations for which these bounds are attained \cite{R2}.  The first author and G. Neumann \cite{K-N} settled her conjecture by giving a bound of $5n-5$ zeros for harmonic mappings of the form $r(z)-\bar{z}$, where $r(z)$ is rational of deg $n > 1$. (See \cite{K-N2} for the exposition and further details.)  Solutions of (\ref{lens2}) are zeros of a \emph{transcendental} harmonic function, so extending the techniques used in \cite{K-N} will require some care (a priori, it is not even clear that the number of zeros is finite, cf. \cite{B-G}, \cite{Kraj}).  Still, our approach draws on the same two main results: (i) the argument principle generalized to harmonic functions and (ii) the Fatou theorem from complex dynamics regarding the attraction of critical points.  In the next section, we will formulate (i).  (ii) will have to be modified for our purposes, so ideas from complex dynamics are worked from scratch into the proof of Lemma \ref{sp} in the third section.  

{\bf Acknowledgement:} We would like to thank Walter Bergweiler, Alex Eremenko, and Charles R. Keeton for stimulating discussions and for sharing some of their unpublished work with us.

\section{Preliminaries: The Argument Principle}

In order to state the generalized argument principle (see \cite{Duren} for a complete exposition and proof), we need to define the order of a zero or pole of a harmonic function.  A harmonic function $h = f + \bar{g}$, where $f$ and $g$ are analytic functions, is called \emph{sense-preserving} at $z_0$ if the Jacobian $Jh(z) = |f'(z)|^2-|g'(z)|^2 > 0$ for every $z$ in some punctured neighborhood of $z_0$. We also say that $h$ is \emph{sense-reversing} if $\bar{h}$ is sense-preserving at $z_0$. If $h$ is neither sense-preserving nor sense-reversing at $z_0$, then $z_0$ is called singular and necessarily (but not sufficiently) $Jh(z_0) = 0$, cf. \cite{Duren}, Ch. 2.  The \emph{order} of a non-singular zero is given by $\frac{1}{2\pi}\Delta_C \arg h(z)$, where $C$ is a sufficiently small circle around the zero.  The order is positive if $h$ is sense-preserving at the zero and negative if $h$ is sense-reversing.

Suppose $h$ is harmonic in a punctured neighborhood of $z_0$. We will refer to $z_0$ as a pole of $h$ if $h(z) \rightarrow \infty$ as $z \rightarrow z_0$. Let $C$ be an oriented closed curve that contains neither zeros nor poles of $h$. The notation $\frac{1}{2\pi}\Delta_C \arg h(z)$ denotes the increment in the argument of $h(z)$ along $C$.
Following \cite{Su}, the \emph{order} of a pole of $h$ is given by $-\frac{1}{2\pi}\Delta_C \arg h(z)$, where $C$ is a sufficiently small circle around the pole. We note that if $h$ is sense-reversing on a sufficiently small circle around the pole, then the order of the pole will be negative. We will use the following version of the argument principle which is taken from \cite{Su}:

\begin{thm}\label{AP}
Let $F$ be harmonic, except for a finite number of poles, in a Jordon domain $D$.  Let $C$ be a curve contained in $D$ not passing through a pole or a zero, and let $R$ be the open, bounded region surrounded by $C$. Suppose $F$ has no singular zeros in $R$ and let $N$ be the sum of the orders of the zeros of $F$ in $R$.  Let $P$ be the sum of the orders of the poles of $F$ in $R$.  Then $\Delta_C \arg F(z) = 2 \pi (N-P)$.
\end{thm}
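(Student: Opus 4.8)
The plan is to reduce this global statement to a sum of local computations by exploiting the homotopy invariance of the argument increment, replacing the branch of $\log F$ that one would use in the analytic case (and which is unavailable for genuinely harmonic $F$) by the observation that the relevant $1$-form is closed. First I would set up the localization. Since $F$ has no singular zeros in $R$, every zero of $F$ there is non-singular and hence isolated (near such a zero a harmonic function has the local structure of a power map, so the zero set is discrete, cf.\ \cite{Duren}, Ch.\ 2); as $F$ is harmonic and zero-free on the compact curve $C$ and has only finitely many poles in $\overline{R}$, the zeros can accumulate neither at $C$ nor at a pole, so there are finitely many of them, say $a_1,\dots,a_p$, of orders $n_1,\dots,n_p$, with poles $b_1,\dots,b_q$ of orders $m_1,\dots,m_q$. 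Writing $F=u+iv$, on the punctured region $\Omega:=R\setminus\{a_1,\dots,a_p,b_1,\dots,b_q\}$ the map $F$ is smooth and nowhere zero, and the $1$-form $d\arg F=\dfrac{u\,dv-v\,du}{u^2+v^2}$ is well defined and \emph{closed}, since it is the pullback by $F$ of the closed form $d\theta$ on $\mathbb{C}^{*}$. Consequently $\tfrac{1}{2\pi}\Delta_\gamma\arg F$ is a well-defined integer for every closed curve $\gamma$ in $\Omega$ and is invariant under homotopies of $\gamma$ within $\Omega$.

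Next I would pick pairwise disjoint, positively oriented circles $\gamma_j$ around $a_j$ and $\delta_k$ around $b_k$, all inside $R$ and each enclosing no other distinguished point. Applying Stokes' theorem to the closed form $d\arg F$ on the region bounded by $C$ (positively oriented) together with these small circles (negatively oriented) --- equivalently, using that $C$ is freely homotopic in $\Omega$ to the concatenation $\gamma_1\cdots\gamma_p\,\delta_1\cdots\delta_q$ --- gives
\begin{equation*}
\Delta_C\arg F=\sum_{j=1}^{p}\Delta_{\gamma_j}\arg F+\sum_{k=1}^{q}\Delta_{\delta_k}\arg F .
\end{equation*}
By closedness, each $\Delta_{\gamma_j}\arg F$ is independent of the radius of $\gamma_j$, so it equals $2\pi n_j$ by the very definition of the order of the zero $a_j$; similarly $\Delta_{\delta_k}\arg F=-2\pi m_k$ by the definition of the order of the pole $b_k$. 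Summing and dividing by $2\pi$ gives $\Delta_C\arg F=2\pi(N-P)$.

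I expect the real work to lie entirely in the first paragraph: verifying that $d\arg F$ is a closed form on $\{F\ne 0\}$ and that the resulting winding number behaves as a homotopy invariant even though $F$ is only a smooth, non-holomorphic map into $\mathbb{C}^{*}$, and --- more delicately --- confirming that the hypothesis of no singular zeros genuinely forces the zeros of $F$ to be isolated, which is where the local normal form for sense-preserving harmonic germs (in the spirit of Lewy's theorem) is needed. Once those inputs are in hand, the remainder is bookkeeping with orientations and with the sign conventions built into the definitions of the orders of zeros and poles.
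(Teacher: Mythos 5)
Your proof is correct, but there is nothing in the paper to compare it against line by line: the paper does not prove Theorem \ref{AP} at all, it quotes it from Suffridge and Thompson \cite{Su} and points to \cite{Duren}, Ch.~2, for the exposition and proof. What you have written is essentially a reconstruction of the standard argument behind that cited result: observe that $d\arg F=(u\,dv-v\,du)/(u^2+v^2)$ is the pullback of $d\theta$ and hence closed on $\{F\neq 0\}$, use Stokes/homotopy invariance to replace $C$ by small circles about the finitely many zeros and poles, and then read off $2\pi n_j$ and $-2\pi m_k$ from the paper's definitions of the orders (in particular your sign for the poles matches the convention ``order of a pole $=-\frac{1}{2\pi}\Delta_C\arg h$''). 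The one substantive ingredient you invoke without proof is the one you correctly flag: that a non-singular zero is isolated, which together with the absence of zeros near $C$ and near the poles yields finiteness. That is precisely the local structure result for sense-preserving (Lewy-type) harmonic germs in \cite{Duren}, Ch.~2, and it is in any case implicitly required just for the ``order'' of a zero in the statement to be well defined, so relying on it is legitimate; your finiteness argument should also note explicitly that an interior accumulation point of zeros would itself be a (non-singular, hence isolated) zero, which is the contradiction. In short: the approach is sound and is the same in spirit as the proof the paper outsources to \cite{Su} and \cite{Duren}; what your write-up buys is self-containedness, at the cost of having to import the local normal form as a black box.
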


\section{An Upper Bound for the Number of Images}

\begin{lemma}\label{R}
The solutions of equation (\ref{lens2}) are all contained in a rectangle, $R:=\{|Re(z)|\leq \pi/2,|Im(z)|\leq M\}$, where $M$ is sufficiently large.
\end{lemma}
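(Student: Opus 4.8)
The plan is to treat the two inequalities defining $R$ separately, and by entirely elementary means. The bound $|Re(z)| \le \pi/2$ comes for free: since the left-hand side of (\ref{lens2}) is the \emph{principal} branch of $\arcsin$, whose range is contained in the closed vertical strip $\{\,|Re(w)| \le \pi/2\,\}$, every solution $z$ of (\ref{lens2}) automatically lies in that strip. (This is visible from $\arcsin\zeta = -i\log\bigl(i\zeta+\sqrt{1-\zeta^2}\,\bigr)$, or just from the standard description of the principal branch.)

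For the bound on $|Im(z)|$, I would apply $\sin$ to both sides of (\ref{lens2}). Because $\sin(\arcsin\zeta)=\zeta$ for every $\zeta$ in the domain of the principal branch, each solution of (\ref{lens2}) satisfies
\begin{equation*}
\sin(z)\,(\bar z + \bar w) = k .
\end{equation*}
Write $z = x + iy$ with $|x| \le \pi/2$. A direct computation gives $|\sin(x+iy)|^2 = \sin^2 x + \sinh^2 y$, so $|\sin z| \ge |\sinh y|$, which grows like $\tfrac12 e^{|y|}$ as $|y|\to\infty$; meanwhile $|\bar z + \bar w| = |z+w| \ge |y + Im(w)| \ge |y| - |Im(w)|$. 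Hence along any sequence of solutions with $|y|\to\infty$ we would have
\begin{equation*}
\bigl|\sin(z)\,(\bar z + \bar w)\bigr| \ \ge\ |\sinh y|\,\bigl(|y| - |Im(w)|\bigr) \ \longrightarrow\ \infty,
\end{equation*}
contradicting that this quantity equals the constant $k$. So there is an $M = M(k,w)$ with $|Im(z)| \le M$ for every solution, and combined with the first paragraph this puts every solution in $R$.

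I do not expect a genuine obstacle here; the only points deserving a word of care are (a) that the passage from (\ref{lens2}) to the identity $\sin(z)(\bar z + \bar w) = k$ is a valid implication — it is, since $\sin\circ\arcsin$ is the identity even though $\arcsin\circ\sin$ is not, and although this step may in principle enlarge the solution set, that is harmless for a \emph{containment} statement — and (b) that the estimate $|\sin z|\ge|\sinh y|$ is invoked only inside the strip $|x|\le\pi/2$, where the solutions are already known to lie. It is worth noting explicitly that Lemma \ref{R} does not yet bound the \emph{number} of solutions; finiteness and the quantitative count will instead follow from applying the argument principle (Theorem \ref{AP}) along $\partial R$, which is carried out next.
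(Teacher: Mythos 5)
Your proposal is correct and follows essentially the same route as the paper: the strip bound from the range of the principal branch of $\arcsin$, and the bound on $|Im(z)|$ by applying $\sin$ to both sides of (\ref{lens2}) and comparing moduli as $|Im(z)|\to\infty$ (your form $\sin(z)(\bar z+\bar w)=k$ is just the conjugate rearrangement of the paper's equation (\ref{lens3})). The explicit estimate $|\sin(x+iy)|^2=\sin^2x+\sinh^2y$ and the remarks on the one-sided implication are fine and consistent with the paper's argument.
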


\begin{proof}
The requirement that $|Re(z)|\leq \pi/2$ is immediate since this strip is the image of $\mathbb{C}$ under the principal branch of $\arcsin$.  To see that there exists an $M$ such that solutions of (\ref{lens2}) satisfy $|Im(z)| \leq M$, take $\sin$ of both sides.  This leads to

\begin{equation}
	\frac{k}{\overline{\sin(z)}}=z+w.
	\label{lens3}
\end{equation}

We consider the modulus of each side of (\ref{lens3}) for $z=x+iy$ with large values of $|y|$.  Recall, $\sin(x+iy)=\sin(x)\cosh(y)+i\cos(x)\sinh(y)$.  As $y \rightarrow \pm \infty$, $|\frac{k}{\sin(x+iy)}|= k/\sqrt{\sin^2x \cosh^2y+\cos^2x \sinh^2y} \rightarrow 0$, uniformly in $x$.  On the other hand, $|z+w| \rightarrow \infty$.

\end{proof}

\noindent{\bf Remark:} By this lemma, we can bound the number of solutions of (\ref{lens2}) by bounding the number of zeros of $F(z):=z+w-\frac{k}{\overline{\sin(z)}}$ in the rectangle, $R$.  Let us calculate the increment of the argument of $F(z)$ when $\partial R$ is traced counterclockwise.

\begin{lemma}\label{arg}
$\Delta_{\partial R}\arg F(z) \geq -2\pi$, where $F(z):=z+w-\frac{k}{\overline{\sin(z)}}$.
\end{lemma}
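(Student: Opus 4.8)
The plan is to compute the increment of $\arg F$ side by side around $\partial R$, traversed counterclockwise, and then add up. Write $z = x+iy$ and $w = w_1 + iw_2$. The observation driving everything is that on the two vertical sides $x = \pm\pi/2$ one has $\sin(\pm\tfrac{\pi}{2} + iy) = \pm\cosh y \in \mathbb{R}$, so the transcendental term $k/\overline{\sin z}$ is \emph{real} there; hence on a vertical side $\operatorname{Im} F = y + w_2$ is a strictly monotone function of $y$, and only $\operatorname{Re} F$ is perturbed. Therefore the $F$-image of each vertical side crosses the real axis at exactly one point, namely at $y = -w_2$ (which lies in $[-M,M]$ once $M > |w_2|$): the point $x_r^* := \tfrac{\pi}{2} + w_1 - k/\cosh w_2$ on the right side and $x_\ell^* := -\tfrac{\pi}{2} + w_1 + k/\cosh w_2$ on the left side. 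Throughout we assume $F$ has no zeros on $\partial R$ --- this can be arranged by the choice of $M$ and is needed in any case to apply Theorem \ref{AP} --- so in particular $x_r^* \neq 0$ and $x_\ell^* \neq 0$.

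First I would dispose of the horizontal sides $y = \pm M$. Since $|\sin(x \pm iM)|^2 = \sinh^2 M + \sin^2 x \ge \sinh^2 M$, we get $|k/\overline{\sin z}| \le |k|/\sinh M$, so $F = z + w + o(1)$ uniformly in $x$; as $|\operatorname{Im} F|$ is of order $M$ while $|\operatorname{Re} F|$ stays bounded, the image of each horizontal side lies in a thin sector about the positive or negative imaginary direction, stays in a half-plane, and contributes $o(1)$ to $\Delta_{\partial R}\arg F$ as $M \to \infty$.

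For a vertical side I would split it at $y = -w_2$ into the piece on which $\operatorname{Im} F \le 0$ and the piece on which $\operatorname{Im} F \ge 0$. On such a piece $\arg F$ admits a continuous branch taking values in an interval of length $\pi$ (either $[-\pi,0]$ or $[0,\pi]$ after a suitable $2\pi$-shift), so $\arg F$ cannot make a full turn and its increment there is simply the difference of the (branch-consistent) arguments of the two endpoints; those are close to $-\pi/2$ and $+\pi/2$ at the far ends and are $0$ or $\pm\pi$ at the splitting point, according to the sign of $x_r^*$ (resp. $x_\ell^*$). Adding the two pieces: the right side contributes $\pi + o(1)$ if $x_r^* > 0$ and $-\pi + o(1)$ if $x_r^* < 0$, while the left side, traversed downward, contributes $-\pi + o(1)$ if $x_\ell^* > 0$ and $\pi + o(1)$ if $x_\ell^* < 0$. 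In every case each vertical side contributes at least $-\pi - o(1)$.

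Summing the four sides gives $\Delta_{\partial R}\arg F \ge -2\pi - o(1)$ as $M \to \infty$. Since $F$ is continuous and nonvanishing on the closed curve $\partial R$, the number $\Delta_{\partial R}\arg F$ is an integer multiple of $2\pi$; being $> -4\pi$ once $M$ is large, it must be $\ge -2\pi$, which is the assertion. (The same bookkeeping in fact pins $\Delta_{\partial R}\arg F$ down to one of $-2\pi,\,0,\,2\pi$.) I expect the only delicate point to be keeping the branch of $\arg F$ consistent when passing from the lower to the upper piece of a vertical side --- precisely where the half-plane/no-full-turn observation is used --- together with checking that the $o(1)$ terms on the horizontal sides are uniform in $x$, which is already contained in the proof of Lemma \ref{R}.
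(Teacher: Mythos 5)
Your proof is correct, and it shares the paper's skeleton --- the same edge-by-edge decomposition of $\partial R$ and the same two key facts, namely that $k/\overline{\sin z}$ is negligible on the horizontal edges for large $M$ and is purely real on the vertical edges $\operatorname{Re} z=\pm\pi/2$ --- but the way you extract the bound is genuinely different in execution. The paper uses the realness of the perturbation to describe the image curve geometrically (the translated rectangle with its left and right sides ``bent inward''), splits into the cases $k<\pi/2$ and $k>\pi/2$ according to whether the bent sides intersect, and reads the possible values $2\pi,0,-2\pi$ of $\Delta_{\partial R}\arg F$ off the resulting pictures. You instead exploit the exact identity $\operatorname{Im}F=y+\operatorname{Im}w$ on the vertical edges to get monotonicity, hence a single real-axis crossing per vertical edge, confine each half of the edge-image to a closed half-plane so its argument increment is an endpoint difference of size at most $\pi$, and then combine the per-edge totals ($\pm\pi+o(1)$ for each vertical edge, $o(1)$ for each horizontal edge) with the integrality of the winding number to land in $\{-2\pi,0,2\pi\}$. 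What your route buys is a fully quantitative argument with no case distinction on $k$ and no reliance on the figure, and it even recovers the paper's trichotomy as a byproduct (your conditions $x_r^*<0$, $x_\ell^*>0$ for the value $-2\pi$ force $k>\tfrac{\pi}{2}\cosh(\operatorname{Im}w)\geq\pi/2$, matching the paper's criterion); what the paper's route buys is a more immediate geometric picture of $F(\partial R)$. One small caveat: your parenthetical claim that zeros of $F$ on $\partial R$ ``can be arranged by the choice of $M$'' is only accurate for the horizontal edges, since enlarging $M$ does not move the vertical edges and for exceptional parameters a zero can sit on $\operatorname{Re} z=\pm\pi/2$ (forcing $x_r^*=0$ or $x_\ell^*=0$); but this degenerate situation is implicitly excluded in the paper's proof as well, and at the level of Theorem \ref{bound} it is absorbed by the perturbation argument in $w$, so it does not affect the substance of your proof.
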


\begin{proof}
Consider the four links $V_{\pm}(t)=\pm\frac{\pi}{2}\pm i(-M+t)$, $0\leq t \leq 2M$, $H_{\pm}(t)= \pm(\frac{\pi}{2}-t) \pm iM$, $0\leq t \leq \pi$, which trace the right, left, top, and bottom edges, respectively.  We need to determine the effect of the term $-\frac{k}{\overline{\sin(z)}}$.  Without this term, $F(z)$ is just translation $z \rightarrow z + w$, and in that case $F(V_{\pm}(t))$ and $F(H_{\pm}(t))$ trace the edges of the translated rectangle.  

By choosing $M$ large enough in the previous lemma, we can neglect the term $-\frac{k}{\overline{\sin(z)}}$ on the top and bottom edges.  On the right edge, $\frac{k}{\overline{\sin(V_{+}(t))}}=\frac{k}{\cosh(-M+t)}$ is pure real and increases monotonically from a small value at $t=0$ to the value $k$ at $t=M$.  On the interval $M \leq t \leq 2M$, $\frac{k}{\cosh(-M+t)}$ \emph{decreases} monotonically from $k$ at $t=M$ back to the original value at $t=2M$.  Similarly, on the left edge, $\frac{k}{\overline{\sin(V_{-}(t))}}=-\frac{k}{\cosh(-M+t)}>-k$.  Thus, the effect of the term $-\frac{k}{\overline{\sin(z)}}$ is to bend the left and right sides of the translated rectangle inward, so that they cross each other if and only if $k>\frac{\pi}{2}$ (compare the two images in figure \ref{both}).

If $k<\pi/2$, then the images of the left and right edges do not intersect, and either $\Delta_{\partial R}\arg F(z) = 2\pi$, or, if $F(\partial R)$ does not surround the origin, $\Delta_{\partial R}\arg F(z) = 0$.  See the left image in figure \ref{both}.

\begin{figure}[h]
	\includegraphics[scale=0.5]{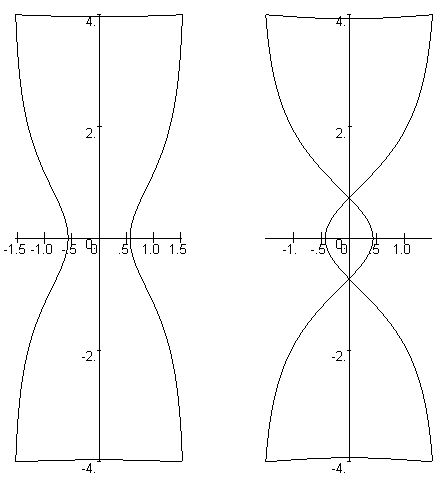}
	\caption{The image of $\partial R$ under $F(z)$ with $w=0$ and two choices for the value of $k$.  In the left, $k=1<\pi/2$.  In this case, $\Delta_{\partial R}\arg F(z) = 1$.  If we set $w$ to, 	say, $1$ then we have $\Delta_{\partial R}\arg F(z) = 0$. In the right, $k=2>\pi/2$, and $\Delta_{\partial R}\arg F(z) = -1$.  If we set $w$ to, say, $1$ or $i$ then we have $\Delta_{\partial R}\arg F(z) = 0$ or $1$, respectively.}
	\label{both}
\end{figure}

If $k>\pi/2$, then the images of the left and right edges intersect exactly twice.  In this case, there is a third possibility in which $\Delta_{\partial R}\arg F(z)$
$ = -2\pi$.  See the right image in figure \ref{both}.

\end{proof}

Define the anti-analytic function $f(z):=\frac{k}{\overline{\sin(z)}}-w$ whose fixed points coincide with the zeros of $F(z)$.  Notice that sense-preserving zeros of $F(z)$ coincide with attracting fixed points of $f(z)$ (see \cite{DHL} and \cite{C-G}).  Indeed, a fixed point $z_0$ is attracting when $|f'(z_0)|<1$ which, if satisfied, holds in a neighborhood of $z_0$ so that $1-|f'(z)|>0$ near $z_0$ which is the condition required for $F(z)$ to be sense-preserving at $z_0$.  We use complex dynamics to bound the number of attracting fixed points of $f(z)$.  The version of the Fatou theorem found in most textbooks on complex dynamics such as \cite{C-G} falls short, since $f$ has infinitely many critical points, and $f$ is not analytic but rather anti-analytic.  The extensions of the Fatou theorem discussed in, e.g., the survey \cite{Berg} are almost sufficient to cover our situation, but $f$ is still assumed to be analytic.  For the reader's convenience, we give an independent direct proof of the next lemma.

\begin{lemma}\label{sp}
The number of sense-preserving zeros, $n_+$, is at most 3.
\end{lemma}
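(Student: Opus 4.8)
The plan is to use the identification, noted just above, of the sense-preserving zeros of $F$ with the attracting fixed points of the anti-analytic map $f(z)=\frac{k}{\overline{\sin z}}-w$, and then to bound the number of attracting fixed points by a self-contained Fatou-type argument. Since $\overline{\sin z}=\sin\bar z$, one may write $f(z)=H(\bar z)$ with $H(\zeta)=\frac{k}{\sin\zeta}-w$ meromorphic, so the second iterate $g:=f\circ f$ is an analytic (transcendental meromorphic) map, and an attracting fixed point $z_0$ of $f$ is an attracting fixed point of $g$ with multiplier $\lambda=|\partial_{\bar z}f(z_0)|^{2}<1$. Distinct attracting fixed points $z_1,\dots,z_{n_+}$ of $f$ have pairwise disjoint immediate basins $\Omega_1,\dots,\Omega_{n_+}$ under $g$; and since $f$ commutes with $g$ and fixes each $z_j$, each $\Omega_j$ is itself invariant under $f$.

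First I would record that $f$ has at most three singular values. The critical points of $f$ are the zeros of $\partial_{\bar z}f=-k\cos\bar z/\sin^{2}\bar z$, namely the real points $z=(2m+1)\pi/2$, where $\overline{\sin z}=\pm1$; hence the only critical values are $k-w$ and $-k-w$. Moreover $|\overline{\sin z}|\to\infty$ as $|\operatorname{Im}z|\to\infty$, so $-w$ is an asymptotic value, and one checks there are no others (the fibres $\{\sin\zeta=s\}$ are discrete for finite $s\neq0$, while $\sin\zeta\to\infty$ forces $\zeta$ to stay bounded). Thus the set $S$ of singular values of $f$ satisfies $S\subseteq\{k-w,\,-k-w,\,-w\}$, so $|S|\le 3$.

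The heart of the argument is the claim that each immediate basin $\Omega_j$ contains a point of $S$; granting this, disjointness of the $\Omega_j$ forces $n_+\le 3$. I would prove the claim by contradiction, following Fatou. Fix $z_0=z_j$ and $\Omega=\Omega_j$; the superattracting case, where $z_0$ is itself one of the real critical points and its critical value $\pm k-w=z_0$ already lies in $\Omega$, is immediate, so assume $\lambda\in(0,1)$. Choose a round disc $D_0\ni z_0$ with $\overline{g(D_0)}\subset D_0$ on which $g$ is injective, and let $D_n$ be the component of $g^{-n}(D_0)$ containing $z_0$, so that $D_0\subset D_1\subset\cdots$ and $\bigcup_n D_n=\Omega$. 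If $S\cap\Omega=\emptyset$, then each restriction $g\colon D_{n+1}\to D_n$ is both unbranched and proper: a critical point of $g=f\circ f$ in $D_{n+1}$ would, by the chain rule, put a critical value of $f$ into $f(D_{n+1})\cup g(D_{n+1})\subset\Omega$, while a failure of properness over the disc $D_n$ would produce an asymptotic path for one of the two factors $f$, hence an asymptotic value of $f$ in $\Omega$. Therefore every $g\colon D_{n+1}\to D_n$ is a conformal isomorphism, every $D_n$ is simply connected, and extending the Koenigs linearizer of $g$ at $z_0$ by the functional equation $\Phi\circ g=\lambda\Phi$ along the chain $D_0\subset D_1\subset\cdots$ produces a conformal isomorphism $\Phi\colon\Omega\to\mathbb{C}$. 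This forces $\Omega=\mathbb{C}$, contradicting the nonemptiness of the Julia set of the transcendental meromorphic map $g$ (indeed $\Omega$ cannot contain the poles of $g$). Hence $\Omega$ meets $S$, and the lemma follows.

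The step I expect to be the main obstacle is the properness assertion: one must show carefully that if $g\colon D_{n+1}\to D_n$ fails to be proper, the resulting asymptotic value of $f$ really lies in the \emph{open} set $\Omega$, and not merely on $\partial\Omega$, which means treating the two copies of $f$ in $g=f\circ f$ separately and, if needed, slightly enlarging $D_n$ within $\Omega$. The same bookkeeping — passing from statements about $g$ to statements about the singular values of $f$ itself — is what keeps the bound at $3$ rather than the $6$ one would get from the singular values of $g$.
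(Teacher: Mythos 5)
Your strategy is, at its core, the paper's: you isolate the same three points $\{k-w,\,-k-w,\,-w\}$ (the two critical values and the single asymptotic value of $f$, i.e.\ precisely the points $-w,\,-w\pm k$ over which branches of $f^{-1}(\zeta)=\arcsin\bigl(\tfrac{k}{\overline{\zeta+w}}\bigr)$ fail to continue), and you argue, Fatou-style, that each attracting fixed point must capture one of them, giving $n_+\le 3$. The execution differs: the paper never passes to the second iterate $g=f\circ f$ and never needs properness. It notes that $f^{-1}$ has a single-valued, injective branch on \emph{any} simply connected domain omitting the three points (explicit from the branch structure of $\arcsin$ composed with an anti-M\"obius map), pulls back a contracting neighborhood of $z_0$ by these branches, and obtains the contradiction from Montel's theorem: the even inverse iterates omit three values, hence form a normal family, while their derivatives at $z_0$ grow like $\gamma^{-n}$. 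Your Koenigs ending ($\Omega$ conformally equivalent to $\mathbb{C}$, hence $\Omega=\mathbb{C}$) is a legitimate alternative; note only that $g$ is \emph{not} a transcendental meromorphic map on $\mathbb{C}$ (it has essential singularities at the poles $\pi\mathbb{Z}$ of the inner factor), so instead of citing nonemptiness of its Julia set you should conclude directly that $\Omega=\mathbb{C}$ would contain the three omitted singular values, or the poles of $f$.

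The step you flag is indeed not closed as written, and the proposed fix (enlarging $D_n$ inside $\Omega$) does not close it. If $g\colon D_{n+1}\to D_n$ fails to be proper, the troublesome case is a sequence $z_k\in D_{n+1}$ with $z_k\to\infty$ and $w_k=f(z_k)$ converging to a finite $q$ with $f(q)=p\in D_n$: this exhibits $q$ as an asymptotic value of the inner factor, but a priori only $q\in\overline{\Omega}$, and no enlargement of $D_n$ moves $q$ off $\partial\Omega$; there is also a sequence-versus-path issue in declaring $q$ (or, when $w_k\to\infty$, the point $p$) an asymptotic value at all. Two standard repairs: (a) since $f(q)=p\in D_n\subset\Omega$ and $f(\Omega)\subset\Omega$, the $g$-orbit of $q$ converges to $z_0$, so $q$ lies in the basin; an open-basin point lying in $\overline{\Omega}$ lies in the immediate basin $\Omega$, contradicting $S\cap\Omega=\emptyset$. (b) More simply, pull back one application of $f$ at a time: using your own observation $f(\Omega)\subset\Omega$, each single-step pullback component through $z_0$ stays in $\Omega$, is simply connected by induction, and omits the three singular values, so each step is an injective homeomorphism either by the standard theorem that a meromorphic (or anti-meromorphic) map is an unbranched covering over any domain free of its singular values, or by the paper's explicit arcsine branches. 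Option (b) removes $g$, properness, and the asymptotic-path bookkeeping entirely and reduces your argument to the one in the paper.
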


\begin{proof}
Suppose $z_0$ is a sense preserving zero.  Then $|f'(z_0)| = \gamma < 1$, and there is a neighborhood, $U_1$, of $z_0$ which is contracted to $z_0$ under iterations of $f(z)$.  Consider $f^{-1}(\zeta)=\arcsin(\frac{k}{\overline{\zeta+w}})$.  Recall that $\arcsin$ has a single-valued, injective branch in any simply connected domain in $\mathbb{C} \setminus \{\pm 1, \infty \}$.  Since $\frac{k}{\overline{\zeta+w}}$ injectively maps $\mathbb{C} \setminus \{-w, -w \pm k \}$ into $\mathbb{C} \setminus \{\pm 1, \infty \}$, we conclude $f^{-1}$ has a single-valued, injective branch in any simply connected domain which omits $\{-w, -w \pm k \}$.  

Now we consider the algorithm described in the proof of Fatou's theorem.  Suppose $U_1$ omits $\{-w, -w \pm k \}$.  Then choose the single-valued branch of $f^{-1}$ so that $f^{-1}(z_0)=z_0$.  Then $U_2:=f^{-1}(U_1)$ is a simply connected domain, since $f^{-1}$ is analytic and injective.  Therefore, if $U_2$ omits $\{-w, -w \pm k \}$ then $f^{-1}$ has a single-valued extension to $U_2$.  We can proceed inductively, defining $U_{n+1}:=f^{-1}(U_n)$, provided $U_n$ omits $\{-w, -w \pm k \}$.  

If the algorithm does not stop, then it produces an infinite family, $\{f^{-2n}\}_{n=0}^{\infty}$, of functions analytic on the domain $U_1$ (odd iterates of $f^{-1}$ are anti-analytic).  This family omits $\{-w, -w \pm k \}$ and is therefore a normal family (by Montel's theorem).  This contradicts the divergence of $\frac{d}{d\zeta}f^{-n}(\zeta)|_{\zeta=z_0}$ which can be seen by repeated application of the chain rule along with the fact that $\frac{d}{d\zeta}f^{-1}(\zeta)|_{\zeta=z_0}$ has modulus $1/\gamma > 1$.  Thus, at least one of the three points $-w, -w \pm k $ is attracted to $z_0$ under iteration of $f(z)$, and the lemma follows.
\end{proof}

\begin{thm}\label{bound}
The number of solutions to (\ref{lens2}) is bounded by 8.
\end{thm}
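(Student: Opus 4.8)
The plan is to apply the generalized argument principle (Theorem~\ref{AP}) to the harmonic function $F(z)=z+w-\frac{k}{\overline{\sin(z)}}$ on the rectangle $R$ of Lemma~\ref{R}, feeding in the boundary estimate of Lemma~\ref{arg} and the bound on sense-preserving zeros of Lemma~\ref{sp}. By Lemma~\ref{R} every solution of (\ref{lens2}) is a zero of $F$ lying in $R$, so it is enough to bound the number of zeros of $F$ in $R$.

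First I would locate the poles of $F$ in $R$. Since $\sin(z)$ vanishes in the closed strip $\{|\mathrm{Re}(z)|\le\pi/2\}$ only at $z=0$, the function $F$ has a single pole in $R$, at the origin. To find its order, note that near $z=0$ one has $\frac{k}{\overline{\sin(z)}}=\frac{k}{\bar z}+O(\bar z)$, so $F(z)=-\frac{k}{\bar z}+O(1)$; hence $F$ is sense-reversing on a small circle $C$ about the origin, and parametrizing $z=\varepsilon e^{i\theta}$ gives $\Delta_C\arg F(z)=2\pi$. With the sign convention of Section~2 the pole at $0$ therefore has order $-1$, so the sum $P$ of the orders of the poles of $F$ in $R$ equals $-1$.

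Next I would assemble the estimates. Theorem~\ref{AP} gives $\Delta_{\partial R}\arg F(z)=2\pi(N-P)$, where $N$ is the sum of the orders of the zeros of $F$ in $R$; with $P=-1$ and Lemma~\ref{arg} this yields $N-P\ge-1$, i.e.\ $N\ge-2$. Because the analytic part $z+w$ of $F$ has derivative identically $1$, a short argument with the Jacobian $JF=1-\bigl|\tfrac{k\cos z}{\sin^{2}z}\bigr|^{2}$ (using the maximum/minimum principle for the analytic function $\tfrac{k\cos z}{\sin^{2}z}$) shows that every non-singular zero of $F$ is simple: of order $+1$ if sense-preserving, of order $-1$ if sense-reversing. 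Writing $n_+$ and $n_-$ for the numbers of sense-preserving and sense-reversing zeros, we obtain $N=n_+-n_-\ge-2$, hence $n_-\le n_++2$. By Lemma~\ref{sp}, $n_+\le 3$, so $n_-\le 5$ and the total number of solutions of (\ref{lens2}) is $n_++n_-\le 3+5=8$.

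The step I expect to cause the most trouble is the treatment of degenerate parameter values, i.e.\ those $(w,k)$ for which $F$ has a zero on $\partial R$ or a singular zero in $R$, both of which are excluded by the hypotheses of Theorem~\ref{AP}. Zeros on $\partial R$ are controlled directly: on the vertical edge $\mathrm{Re}(z)=\pm\pi/2$ one has $\frac{k}{\overline{\sin(z)}}=\pm k/\cosh(\mathrm{Im}(z))$, so $F$ has at most one zero on each vertical edge and, for $M$ large, none on the horizontal edges; one then replaces $\partial R$ by a slightly perturbed Jordan curve detouring around these finitely many points, still enclosing all solutions and leaving the estimate of Lemma~\ref{arg} intact. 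Singular interior zeros require a limiting argument: the signed total $N$ is a local degree, hence stable under small perturbations of $(w,k)$ that keep zeros off the contour, while $n_+\le 3$ holds for all parameters; one perturbs to a nearby non-degenerate value, applies the argument above, and passes back to the limit, the delicate point being to verify, using the explicit form of $F$, that the number of solutions cannot exceed $8$ at the degenerate value itself. Once these degeneracies are disposed of, the bound of $8$ follows from the three paragraphs above.
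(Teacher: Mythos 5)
Your proposal follows essentially the same route as the paper: confine the zeros of $F(z)=z+w-\frac{k}{\overline{\sin(z)}}$ to $R$ (Lemma \ref{R}), apply the harmonic argument principle with the boundary estimate of Lemma \ref{arg} and the single sense-reversing pole at the origin of order $-1$, and combine $n_+\le 3$ from Lemma \ref{sp} with $n_+-n_-\ge -2$ to conclude $n_-\le 5$ and a total of at most $8$. The additional details you supply (simplicity of non-singular zeros via the maximum principle, zeros on $\partial R$) are correct refinements of steps the paper leaves implicit, and your perturbation treatment of singular zeros is the same device the paper uses, namely perturbing by a small constant after noting that regular parameter values are dense.
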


\begin{proof}
By the remark following Lemma \ref{R}, the total number of solution to (\ref{lens2}) equals the total number of zeros of $F(z)$ in $R$.  Recall that $F(z)$ is called ``regular'' if it is free of singular zeros (see \cite{K-S} and \cite{DHL}).  Suppose for the moment that $F(z)$ is regular.  Then, the total number of zeros of $F(z)$ in $R$ is $N = n_+ + n_-$, where $n_+$ and $n_-$ count, respectively, the sense-preserving and sense-reversing zeros of $F(z)$ in $R$.  By Lemma \ref{arg} and Theorem \ref{AP}, $-1 \leq N - P$.  By Lemma \ref{sp} and the fact that $F(z)$ has one sense-reversing pole in $R$, this becomes $-1 \leq 3 - n_- + 1$, so that $n_- \leq 5$.  Thus, $N = n_+ + n_- \leq 8$.

Fix $k$.  There is a dense set of parameters $w$ for which $F(z)$ is regular.  Indeed, consider the image of $\{z:|\frac{d}{dz}(\frac{k}{\sin(z)})|=1\}$ under $z-\frac{k}{\overline{\sin(z)}}$.  This set has empty interior, and if $w$ is in its complement, $F(z)$ is free of singular zeros.

Now suppose $F(z)$ is not regular.  If $F(z)$ is perturbed by a sufficiently small constant, the number of zeros is not changed.  By the preceding, we can choose an arbitrarily small constant so that the perturbation is regular.  Thus, the bound, $N \leq 8$, holds for all $F(z)$.
\end{proof}

\section{Concluding Remarks}

\begin{figure}[h]
\includegraphics[scale=.5]{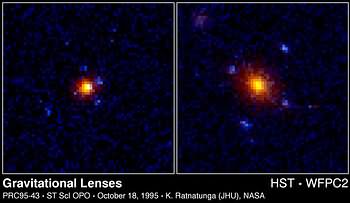}
\caption{Four images of a light source behind an elliptical galaxy. (Credit: NASA, Kavan Ratnatunga, Johns Hopkins Univ.)}
\label{4images}
\end{figure}

So far, astronomers have observed only up to 5 images (4 bright + 1 dim) produced by an elliptical lens (see figure \ref{4images}).  In \cite{KMW} there have been constructed explicit models (depending on the semiaxes of the ellipse) having 9 images (8 bright + 1 dim) but only in the presence of a shear, i.e. a (linear) gravitational pull from infinity (a term $\gamma \bar{z}$ added to equation (\ref{lens2})).  So far, we have not been able to obtain a universal bound in the presence of a shear that is similar to Theorem \ref{bound}.  It seemed, based on NASA observations, natural to conjecture that, in the absence of shear, there can be at most 4 bright images.  Yet, recently W. Bergweiler and A. Eremenko generated an example with 6 bright images \cite{BE}. With their kind permission, we include their example (see figure \ref{AlexWalter}). Based on their breakthrough and the investigation carried out in \cite{KMW}, we conjecture the following.

\begin{conj}
(i) The number of bright images lensed by an isothermal elliptical galaxy without shear is at most 6.
(ii) In the presence of a shear, the number of bright images is at most 8.
\end{conj}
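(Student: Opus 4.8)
The plan is to recast the conjecture as a refinement of Theorem \ref{bound}. Write $F(z) = z + w - \frac{k}{\overline{\sin z}}$ as before, with a shear term $\gamma\bar z$ adjoined for part (ii). In these models one of the zeros of $F$ is a faint (demagnified) central image produced by the core singularity at $z=0$; counting the remaining zeros as bright, the conjecture reads $N \le 7$ in the shear-free case (i) and $N \le 9$ in the sheared case (ii), where $N$ is the total number of zeros of the relevant $F$ in $R$. Thus part (i) asks for one improvement over the bound $N \le 8$ of Theorem \ref{bound}, while part (ii) asks for a finite bound in a regime where, as noted after the theorem, no universal bound is presently known.

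For part (i) I would first extract the index data already implicit in Lemmas \ref{arg} and \ref{AP}. Since the pole at the origin is sense-reversing of order $-1$ and $\Delta_{\partial R}\arg F \in \{-2\pi, 0, 2\pi\}$, Theorem \ref{AP} forces $n_+ - n_- \in \{-2,-1,0\}$, i.e. $n_- = n_+ + \delta$ with $\delta \in \{0,1,2\}$. Combined with $n_+ \le 3$ from Lemma \ref{sp}, this gives $N = 2n_+ + \delta \le 8$, with equality \emph{only} in the single extremal configuration $n_+ = 3$, $\delta = 2$, which by Lemma \ref{arg} additionally requires $k > \pi/2$ so that $F(\partial R)$ winds in the fully negative sense. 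The heart of part (i) is therefore to exclude this one configuration. My plan is to sharpen Lemma \ref{sp}: the bound $n_+ \le 3$ comes from assigning to each attracting fixed point one of the three singular values $\{-w,\, -w\pm k\}$ of $f^{-1}$, and I would try to show that when all three are simultaneously captured (so $n_+ = 3$) the basin geometry is incompatible with $\Delta_{\partial R}\arg F = -2\pi$. The main obstacle is exactly this step: the count $n_+ \le 3$ is dynamical (Montel normality of backward iterates) whereas $\delta$ is topological (boundary winding of $F$), and there is no evident mechanism forcing the two to conflict. Making the link rigorous—perhaps by tracking the asymptotic value of $f$ arising from the logarithmic singularity $\arcsin(\infty)$ at $\zeta = -w$, which behaves differently from the two algebraic branch points at $-w \pm k$—is where I expect the real difficulty to lie.

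For part (ii) the first observation is that the shear preserves the fixed-point formulation: $F_\gamma(z) := z + \gamma\bar z + w - \frac{k}{\overline{\sin z}} = 0$ is equivalent to $z = f_\gamma(z)$ with $f_\gamma(z) = -w - \gamma\bar z + \frac{k}{\overline{\sin z}}$ still \emph{anti-analytic}, and sense-preserving zeros again correspond to attracting fixed points. I would begin by re-establishing containment: for physically subcritical shear $|\gamma| < 1$ the strip $|\mathrm{Re}\, z| \le \pi/2$ is unchanged and $z + \gamma\bar z$ still has modulus tending to infinity with $|\mathrm{Im}\, z|$ while $k/\overline{\sin z} \to 0$, so the argument of Lemma \ref{R} should again confine the zeros to a rectangle and secure finiteness. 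The genuine obstruction is the dynamical count: unlike the shear-free case, $f_\gamma^{-1}$ has no closed form because $\gamma\bar z$ and $k/\overline{\sin z}$ cannot be disentangled, so the clean ``exactly three singular values'' argument of Lemma \ref{sp} collapses. In its place I would count the critical and asymptotic values of $f_\gamma$ itself and run a Fatou-type normality argument directly on its forward iterates; the critical-point equation $\overline{k\cos z/\sin^2 z} = -\gamma$ now contributes infinitely many critical points whose critical values must be organized, and I expect the correct tally of relevant singular values to exceed three—consistent with the jump from six conjectured bright images without shear to eight with shear. Extending Lemmas \ref{R}, \ref{arg}, and \ref{sp} in this way would yield $N \le 9$. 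The Bergweiler--Eremenko configuration of \cite{BE} should serve throughout as the test case pinning down sharpness of the bound six in part (i).
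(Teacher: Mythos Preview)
The statement you are attempting to prove is a \emph{conjecture} in the paper; there is no proof of it in the paper to compare against. The authors explicitly leave both parts open, remarking only in an ``Added in proof'' note that part~(i) was subsequently established by Bergweiler and Eremenko in separate work. Your proposal is therefore not a proof but a research outline, and you yourself flag the essential steps as unresolved (``Making the link rigorous\ldots is where I expect the real difficulty to lie''; ``The genuine obstruction is the dynamical count'').

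Beyond that, your recasting of the conjecture contains a concrete error. In this model the ``bright'' images are precisely the solutions of equation~(\ref{lens2}), i.e.\ the zeros of $F$ in $R$; the ``dim'' central image lies \emph{inside} the galaxy, where the lensing potential is different, and is not a zero of $F$ at all (note that $z=0$ is a pole of $F$, not a zero). The Bergweiler--Eremenko example in Figure~\ref{AlexWalter} has exactly six solutions of~(\ref{lens2}) and is described as six bright images. Hence part~(i) asserts $N\le 6$, not $N\le 7$, and part~(ii) asserts the sheared count is at most $8$, not $9$. This matters for your strategy: with $N=2n_{+}+\delta$, $n_{+}\le 3$, $\delta\in\{0,1,2\}$, excluding only the configuration $(n_{+},\delta)=(3,2)$ would yield $N\le 7$; to reach $N\le 6$ you must also rule out $(3,1)$, so the ``single extremal configuration'' picture understates what is required. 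Your sheared analysis inherits the same off-by-one target.
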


We caution the reader that in [8] the mass density was assumed to
be extended all the way to infinity, so the lensing potential in [8] was
different from the one we consider here (and in [3]).

\begin{figure}[h]
\includegraphics[scale=.6]{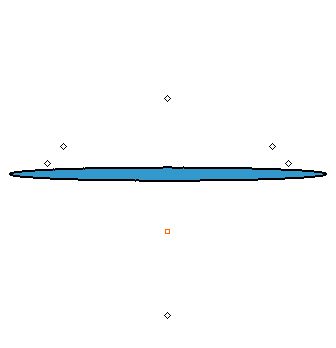}
\caption{Equation \ref{lens2} has 6 solutions when k=1.92 and w=-.67i.  Choosing a = 1, b = .041 , and M = 2 (see Appendix) leads to k = 1.92 and gives the picture of the six images shown here in the $\zeta$-plane along with the galaxy's elliptical silhouette and the source (plotted as box).)}
\label{AlexWalter}
\end{figure}

{\bf Added in proof}:  After this paper was submitted, Walter Bergweiller and Alex Eremenko succeded in proving  part (i) of the conjecture.

\section{Appendix: derivation of the complex lensing equation for the isothermal elliptical galaxy}

Suppose that light from a distant source star is distorted as it passes by an intermediate, continuous distribution of mass which does not deviate too far from being contained in a common plane (the ``lens plane'') perpendicular to our line of sight.  Let $\mu(z)$ denote the projected mass density.  Then basic results from General Relativity combined with Geometric Optics (see \cite{St}) lead to the following lensing equation relating the position of the source (projected to the lens plane) $w$ to the positions of lensed images $z$.

\begin{equation}
	\label{lens}
	z = \int_{\Omega}{\frac{\mu(\zeta)dA(\zeta)}{\bar{\zeta}-\bar{z}}}+w
\end{equation}

Consider, first, the case when the projected density $\mu(z)=D$ is constant and supported on $\Omega:=\{\frac{x^2}{a^2}+\frac{y^2}{b^2} \leq 1, a> b>0\}$, an ellipse.  Then equation (\ref{lens}) becomes

$$z = \int_{\Omega}{\frac{D dA(\zeta)}{\bar{\zeta}-\bar{z}}}+w.$$

By the complex Green's formula, for $z$ outside $\Omega$ (i.e., for ``bright'' images), this becomes

$$z = \frac{D}{2i}\int_{\partial \Omega}{\frac{\zeta d\bar{\zeta}}{\bar{\zeta}-\bar{z}}}+w.$$

The Schwarz function (by definition, analytic and $= \bar{\zeta}$ on $\partial \Omega$) for the ellipse equals ($c^2=a^2-b^2$):
$$S(\zeta) = \frac{a^2+b^2}{c^2}\zeta-\frac{2ab}{c^2}(\sqrt{\zeta^2-c^2})$$
$$=\frac{a^2+b^2-2ab}{c^2}\zeta+\frac{2ab}{c^2}(\zeta-\sqrt{\zeta^2-c^2})$$
$$=S_1(\zeta)+S_2(\zeta)$$
where $S_1$ is analytic in $\overline{\Omega}$, and $S_2$ is analytic outside $\Omega$ and $S_2(\infty) = 0$.  Since $z$ is outside $\Omega$, combining this with Cauchy's formula gives
$$\frac{1}{2i}\int_{\partial \Omega}{\frac{\overline{S(\zeta)} d\bar{\zeta}}{\bar{\zeta}-\bar{z}}}+w=\pi \frac{2ab}{c^2}D(\bar{z}-\sqrt{\bar{z}^2-c^2})+w$$
for the right-hand-side of the lensing equation.

Next consider the case of ``isothermal'' density supported on $\Omega$, $\mu = M/t$ on $\partial \Omega_t$, $\Omega_t:=t\Omega=\{\frac{x^2}{a^2}+\frac{y^2}{b^2} \leq t^2\}, t<1$, and $M$ a constant.

Then the Cauchy potential term in the lensing equation (\ref{lens}) becomes 

\begin{equation}
	\label{CP}
\int_{\Omega}{\frac{\mu(z)}{\bar{\zeta}-\bar{z}}dA(\zeta)}=\int_{0}^{1}{\frac{M}{t}\left[\frac{d}{dt}\int_{\Omega_t}{\frac{dA(\zeta)}{\bar{\zeta}-\bar{z}}}\right]dt}
\end{equation}

For the inside integral, we see that $\int_{\Omega_t}{\frac{dA(\zeta)}{\bar{\zeta}-\bar{z}}}=t^2\int_{\Omega}{\frac{dA(\zeta)}{t\bar{\zeta}-\bar{z}}}=t\int_{\Omega}{\frac{dA(\zeta)}{\bar{\zeta}-\bar{z}/t}}$ which according to our previous calculation is $C_0(\bar{z}-\sqrt{\bar{z}^2-c^2t^2})$, where the constant $C_0$ depends only on $\Omega$.  Now the $t$-derivative of this is $C_0 \frac{t}{\sqrt{\bar{z}^2-c^2t^2}}$. Thus (\ref{CP}) becomes $MC_0\int_{0}^{1}{\frac{dt}{\sqrt{\bar{z}^2-c^2t^2}}}$.

Finally, we arrive at (\ref{lens1}), the lensing equation for the isothermal elliptical galaxy,
$$z = \mathcal{C}\arcsin\left(\frac{c}{\bar{z}}\right) + w,$$
where $\mathcal{C}=\frac{2\pi ab}{c}M$.

\bibliographystyle{amsplain}

\end{document}